\newtheorem{theorem}{Theorem}
\newtheorem{proposition}{Proposition}
\newtheorem{corollary}{Corollary}
\newcommand{\vo}[1]{\boldsymbol{#1}} 
\newcommand{\mo}[1]{\boldsymbol{#1}} 
\newcommand{\real}{\mathbb{R}}
\newcommand{\oderiv}[2]{\frac{d #1}{d #2}}
\newcommand{\Exp}[1]{\boldsymbol{\mathsf{E}} \left[#1\right]}
\newcommand{\pdfp}{p(\param)}
\newcommand{\x}{\vo{x}} 
\newcommand{\xdot}{\dot{\vo{x}}} 
\renewcommand{\u}{\vo{u}} 
\newcommand{\param}{\vo{\Delta}} 
\newcommand{\domain}[1]{{\mathcal{D}_{#1}}} 
\newcommand{\Y}{\vo{Y}} 
\newcommand{\A}{\mo{A}} 
\let\B=\undefined
\newcommand{\B}{\mo{B}} 
\newcommand{\basis}[2]{%
 \phi_{#1}
  \ifthenelse{\isempty{#2}}%
    {}
    {({#2})}
}
\newcommand{\diag}{\boldsymbol{\mathsf{diag}}}
\newcommand{\xpc}{\x_{pc}} 
\newcommand{\xpcdot}{\dot{\x}_{pc}} 
\newcommand{\Apc}{\mo{A}_{pc}} 
\newcommand{\I}[1]{\vo{I}_{{#1}}} 
\renewcommand{\vec}[1]{\boldsymbol{\mathsf{vec}}\left({#1}\right)}
\newcommand{\X}{\mo{X}} 
\newcommand{\W}{\mo{W}} 
\newcommand{\K}{\mo{K}} 
\newcommand{\set}[1]{\mathcal{#1}}
\newcommand{\inner}[1]{\left\langle #1 \right\rangle}
\newcommand{\figlabel}[1]{\label{fig:#1}}
\newcommand{\eqnlabel}[1]{\label{eqn:#1}}
\newcommand{\eqn}[1]{\eqref{eqn:#1}}
\newcommand{\Eqn}[1]{Eqn.(\ref{eqn:#1})}
\newcommand{\fig}[1]{fig.(\ref{fig:#1})}
\newcommand{\Fig}[1]{Fig.(\ref{fig:#1})}
\newcommand{\etal}{\textit{et al. }}
\definecolor{darkgreen}{rgb}{0,0.65,0}
\renewcommand{\P}{\mo{P}} 
\newcommand{\Abar}{\mo{\bar{A}}}
\newcommand{\Bbar}{\mo{\bar{B}}}
\newcommand{\Qbar}{\mo{\bar{Q}}}
\newcommand{\Rbar}{\mo{\bar{R}}}
\date{} 
\title{Robust State Feedback Control Design with Probabilistic System Parameters}
\author{Raktim Bhattacharya}
\begin{document}
\maketitle
\thispagestyle{empty}
\pagestyle{empty}

\begin{abstract}
In this paper, a new polynomial chaos based framework for analyzing linear systems with probabilistic parameters is presented. Stability analysis and synthesis of optimal quadratically stabilizing controllers for such systems are presented as convex optimization problems, with  exponential mean square stability guarantees. A Monte-Carlo approach for analysis and synthesis is also presented, which is used to benchmark the polynomial chaos based approach. The computational advantage of the polynomial chaos approach is shown with an example based on the design of an optimal EMS-stabilizing controller, for an F-16 aircraft model.
\end{abstract}

\section{Introduction}
In this paper we study the problem of designing state feedback controllers for linear time invariant systems with probabilistic system parameters. Such systems in continuous time can be defined as
\begin{align}
 \dot{\x}	 &=  \A(\param)\x + \B(\param)\u, \eqnlabel{contiDyn}
\end{align}
where $\param\in\real^d$ is a vector of uncertain parameters, with joint probability density function $\pdfp$. Matrices $\A(\param)\in\real^{n\times n}$, $\B(\param) \in\real^{n\times m}$ are system matrices that depend on $\param$. Consequently, the solution $\x:=\x(t,\param)\in\real^n$ also depends on $\param$. The objective is to design a state-feedback law in the form $\u = \K\x$, which stabilizes the system in some suitable sense, where $\K\in\real^{m\times n}$. Thus, we are looking to obtain a constant deterministic gain $\K$ that stabilizes the systems in \eqn{contiDyn}, with probabilistic uncertainty in $\param$.

Stability of dynamical systems of the type 
\begin{align}
\dot{\x} = \A(\param)\x, \eqnlabel{simpleSys}
\end{align}
have been extensively studied in the framework of stochastic dynamical systems. Depending on the nature of $\param$, two approaches are commonly used. If $\param$ is Gaussian white noise, the solution process is a diffusion process and is analyzed using theory of Markov processes \cite{dynkin2012theory,bharucha2012elements} and Ito calculus \cite{itocalc}. If $\param$ is not Gaussian white noise, and has well defined samples properties, the system in \eqn{simpleSys} can be analyzed by ordinary rules of calculus. Such systems can be considered to be a collection of ordinary deterministic differential equations upon which a probability measure has been induced by the parameter process $\param$. The second class of systems are often what we actually encounter in engineering problems. The Gaussian white noise case is a mathematical abstraction and admits richer set of analysis tools. However, this should not be the motivation for assuming white noise process for systems with random parameters as it is a non-trivial matter and should be considered carefully \cite{eugene1965relation}. 

Early work on stability analysis of linear systems with randomly time-varying parameters was done by Rosenbloom \cite{rosenbloom1954analysis}, where he studied first order linear systems with stationary Gaussian parameter process. His work focussed on stability properties in terms of the moments. Due to the stationary properties of $\Delta$, the moments and hence the asymptotic properties of system could be studied explicitly. This work was extended by Bertram and Sarachik \cite{bertram1959stability} for linear diagonal systems and were first to provide a Lyapunov framework for studying such systems with random parameters. The parameters were also considered to be Gaussian and recovered Rosebloom's result, with a more general approach. They also considered the problem where $\param$ is random but piecewise constant. At the same time, Kats and Krosovskii \cite{kats1960stability} independently provided a Lyapunov framework for analyzing stability in probability and moments, for systems where the $\param$ process is a stationary Markov process with finite number of states. Bharucha \cite{bharucha1961stability} extended the work by Kats \etal by exploiting the fact that a stationary Markov parameter process admits a piecewise constant property in the linear system. Bharucha also showed that for such systems, asymptotic stability implies exponential stability. Palmer \cite{palmer1966sufficient} significantly extended the work by Kats \etal and Bharucha on linear systems with Markov coefficients by applying Markov chain theory and exploiting the induced piecewise constant property. The piecewise constant property was also exploited in the works of Morozan \cite{morozan1967stability} and Soeda \etal \cite{soeda1966stability}. The above described literature focussed on systems for which the solution can be obtained in closed form. More general linear systems of the form $\dot{\x} = (\A_0 + \A_1(\param))\x$, where $\A_0$ is a stability matrix and $\A_1(\param)$ is matrix whose non-zero coefficients are stationary, ergodic processes with almost surely continuous sample functions, were studied by Kozin \cite{kozin1963almost}. He provided sufficient conditions for asymptotic stability with probability one, for such systems. This work was refined further \cite{caughey1965almost,ariaratnam1967dynamic, morozan1967stability,wang1966almost, infante1968stability, gray1967frequency}. For related work when $\param$ is Gaussian white noise, please refer to \cite{khasminskii2011stochastic, kushner} and references therein.

In this paper, we focus on systems where $\param$ is a vector of random variables, which is a simpler problem than randomly time-varying parameters. The problem of analyzing systems with uncertain, but constant, parameters have also been addressed in the robust control literature. In that approach, the support of $\param$ is assumed to be polytopic, and stability is analyzed for parameter combinations along the vertices of the polytope \cite{bernussou1989linear,Boydetal:1994, boyd1989structured}. This is the so called ``worst-case'' approach for stability analysis. In this paper, we present new stability analysis and control design methods, which ensure exponential mean square stability for systems with probabilistic system parameters. These are developed in the polynomial chaos framework \cite{wiener}. A comparison with worst-case quadratic stability approach \cite{corless1994robust}, for systems with uniformly distributed $\param$ \cite{barmish3} is also performed, which highlights the conservativeness of the worst-case approach. 

The paper is organized as follows. First, we present a brief background on polynomial chaos theory and its application to linear systems with random parameters. This is followed by the main contribution of this paper, captured in propositions 1, 2, and 3; and theorems 1, 2, and 3. For benchmarking, we also present randomized algorithms for system analysis and design. We demonstrate the  computational efficiency of polynomial chaos framework with an example based on an F-16 aircraft model. The paper ends with a summary section.

\section{Polynomial Chaos Theory}
Polynomial chaos is a non-sampling based method to determine evolution of uncertainty in dynamical system with probabilistic system parameters \cite{pcFEM}. Very briefly, a general second order process $X(\omega)\in
\mathcal{L}_2(\Omega,\mathcal{F},P)$ can be expressed by polynomial
chaos as
\begin{equation}
\eqnlabel{gPC}
X(\omega) = \sum_{i=0}^{\infty} x_i\phi_i({\param}(\omega)),
\end{equation}
where $\omega$ is the random event and $\phi_i({\param}(\omega))$
denotes the polynomial chaos basis of degree $p$ in terms of the random variables
$\param(\omega)$. $(\Omega,\mathcal{F},P)$ is a probability space, where $\Omega$
is the sample space, $\mathcal{F}$ is the $\sigma$-algebra of the
subsets of $\Omega$, and $P$ is the probability measure. According to Cameron and Martin \cite{CameronMartin} such an expansion converges in the $\mathcal{L}_2$ sense for any arbitrary stochastic process with finite second moment. In practice, the infinite series is truncated and $X(\omega)$ is approximated by 
\[
X(\omega) \approx \hat{X}(\omega) = \sum_{i=0}^{N} x_i\phi_i({\param}(\omega)).
\] The functions $\{\phi_i\}$ are a family of
orthogonal basis in $\mathcal{L}_2(\Omega,\mathcal{F},P)$ satisfying
the relation
\begin{equation}
\Exp{\phi_i\phi_j}:= \int_{\mathcal{D}_{\param}}\hspace{-0.1in}{\basis{i}{\param}\basis{j}{\param} \pdfp
\,d\param}  = h_i^2\delta_{ij}, \eqnlabel{basisFcn}
\end{equation}
where $\delta_{ij}$ is the Kronecker delta, $h_i$ is a constant
term corresponding to $\int_{\mathcal{D}_{\param}}{\phi_i^2\pdfp\,d\param}$,
$\mathcal{D}_{\param}$ is the domain of the random variable $\param(\omega)$, and
$\pdfp$ is a probability density function for $\param$. 

Polynomial chaos theory is becoming an useful framework to study control systems with random parameters \cite{hover2006application,pctrajgen,fisher2009linear,bhattacharya2012linear,kim2012generalized,ulissi2013control,templeton2012probabilistic}.

\subsection{Application to Dynamical Systems with Random Parameters}
With respect to the dynamical system defined in \eqn{contiDyn}, the solution can be approximated by the polynomial chaos expansion as
\begin{align}
	\x(t,\param) \approx \hat{\x}(t,\param) =  \sum_{i=0}^N \x_i(t)\basis{i}{\param},
\end{align}
where the polynomial chaos coefficients $\x_i \in \real^n$. Define $\mo{\Phi}(\param)$ to be
\begin{align}
\mo{\Phi} &\equiv \mo{\Phi}(\param) := \begin{pmatrix}\basis{0}{\param} & \cdots & \basis{N}{\param}\end{pmatrix}^T, \text{ and } \\
\mo{\Phi}_n &\equiv \mo{\Phi}_n(\param) := \mo{\Phi}(\param) \otimes \I{n},
\end{align}
where $\I{n}\in\real^{n\times n}$ is identity matrix. Also define matrix $\X\in\real^{n\times(N+1)}$, with polynomial chaos coefficients $\x_i$, as
\[ \X = \begin{bmatrix} \x_0 & \cdots & \x_N \end{bmatrix}.\]

This lets us define $\hat{\x}(t,\param)$ as 
\begin{align}
\hat{\x}(t,\param) := \X(t)\mo{\Phi}(\param) \eqnlabel{compactX}.
\end{align}
Noting that $\hat{\x} \equiv \vec{\hat{\x}}$, we obtain an alternate form for \eqn{compactX},
\begin{align}
\hat{\x} \equiv  \vec{\hat{\x}} & = \vec{\X\mo{\Phi}}  = \vec{\I{n}\X\mo{\Phi}} \nonumber \\
& = (\mo{\Phi}^T\otimes \I{n})\vec{\X} = \mo{\Phi}_n^T\xpc, \eqnlabel{compactxpc}
\end{align}
where  $\xpc := \vec{\X}$, and $\vec{\cdot}$ is the vectorization operator \cite{horn2012matrix}.

Since $\hat{\x}$ from \eqn{compactxpc} is an approximation, substituting it in \eqn{simpleSys} we get equation error $\vo{e}$, which is given by 
\begin{align}
\vo{e} &:= \dot{\hat{\x}} - \A(\param)\hat{\x}\\
& =  \mo{\Phi}_n^T\xpcdot - \A(\param)\mo{\Phi}_n^T\xpc.
\end{align}
Best $\set{L}_2$ approximation is obtained by setting
\begin{align}
\inner{\vo{e}\phi_i} := \Exp{\vo{e}\phi_i}=0, \text{ for } i = 0,1,\cdots,N.
\end{align}
Upon simplification, we get a set of $n(N+1)$ \textit{deterministic} ordinary differential equations in $\xpc$, 
\begin{align}
	\xpcdot = \Exp{\mo{\Phi}\otimes\mo{\Phi}_n^T}^{-1}\Exp{\mo{\Phi}\otimes\left(\A\mo{\Phi}_n^T\right)}\xpc. \eqnlabel{pcDynamics}
\end{align}	

Using the following properties of Kronecker product
\begin{align*}
	A\otimes (B \otimes C) &= (A\otimes B) \otimes C, \\
	(A\otimes B)(C\otimes D) &= (AC)\otimes(BD),
\end{align*}
we have the following propositions.\\[5mm]
\begin{proposition} For $\mo{\Phi}_n$, and $\mo{\Phi}$ as defined earlier
\begin{equation}
	\mo{\Phi}\otimes\mo{\Phi}^T_n = (\mo{\Phi}\mo{\Phi}^T)\otimes \I{n}. \eqnlabel{id1}
\end{equation}
\end{proposition}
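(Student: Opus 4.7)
The plan is to unfold the definitions and apply the two Kronecker product identities listed immediately before the proposition. First I would substitute the definition $\mo{\Phi}_n := \mo{\Phi} \otimes \I{n}$, so that $\mo{\Phi}_n^T = \mo{\Phi}^T \otimes \I{n}^T = \mo{\Phi}^T \otimes \I{n}$ (using that the identity is symmetric and that transpose distributes over Kronecker). This rewrites the left-hand side as
\begin{equation*}
\mo{\Phi} \otimes \mo{\Phi}_n^T = \mo{\Phi} \otimes \bigl(\mo{\Phi}^T \otimes \I{n}\bigr).
\end{equation*}

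Next I would invoke the associativity property $A \otimes (B \otimes C) = (A \otimes B) \otimes C$ stated just above the proposition to move the grouping to the left:
\begin{equation*}
\mo{\Phi} \otimes \bigl(\mo{\Phi}^T \otimes \I{n}\bigr) = \bigl(\mo{\Phi} \otimes \mo{\Phi}^T\bigr) \otimes \I{n}.
\end{equation*}

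Finally I would observe that $\mo{\Phi}$ is a column vector of size $(N+1) \times 1$ and $\mo{\Phi}^T$ is the corresponding row vector of size $1 \times (N+1)$; for a column vector $a$ and a row vector $b^T$, the Kronecker product $a \otimes b^T$ coincides with the ordinary outer product $a b^T$, since both equal the $(N+1)\times(N+1)$ matrix with $(i,j)$-entry $a_i b_j$. Hence $\mo{\Phi} \otimes \mo{\Phi}^T = \mo{\Phi}\mo{\Phi}^T$, and the identity \eqn{id1} follows.

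There is no real obstacle here; the only subtle step is the last one, where one must recognize that the Kronecker product of a column and a row vector reduces to the outer (matrix) product. Everything else is a mechanical application of the two properties of the Kronecker product already quoted in the text.
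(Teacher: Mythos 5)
Your proof is correct and follows essentially the same route as the paper: substitute $\mo{\Phi}_n^T = \mo{\Phi}^T \otimes \I{n}$, apply associativity of the Kronecker product, and then identify $\mo{\Phi}\otimes\mo{\Phi}^T$ with $\mo{\Phi}\mo{\Phi}^T$. The only cosmetic difference is in the last step, where the paper derives $\mo{\Phi}\otimes\mo{\Phi}^T = (\mo{\Phi}\otimes 1)(1\otimes\mo{\Phi}^T) = \mo{\Phi}\mo{\Phi}^T$ via the mixed-product rule, while you argue directly that the Kronecker product of a column and a row vector is the outer product; both justifications are valid.
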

\begin{proof}
\begin{align*} \mo{\Phi}\otimes\mo{\Phi}^T_n &=  \mo{\Phi}\otimes(\mo{\Phi}^T\otimes \I{n}) = (\mo{\Phi}\otimes\mo{\Phi}^T)\otimes \I{n} \\
	&= \left[(\mo{\Phi} \cdot 1) \otimes (1 \cdot \mo{\Phi}^T)\right] \otimes \I{n}\\
	&= \left[(\mo{\Phi} \otimes 1)(1 \otimes \mo{\Phi}^T)\right] \otimes \I{n} \\
	&= (\mo{\Phi} \mo{\Phi}^T)\otimes \I{n}	
\end{align*}
\end{proof}

\begin{corollary}
\begin{equation}
	\Exp{\left(\mo{\Phi}\mo{\Phi}^T\right)\otimes \I{n}} = \diag\left(\Exp{\phi_i^2}\right)\otimes \I{n}. \eqnlabel{cor1}
\end{equation}
\end{corollary}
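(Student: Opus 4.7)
The plan is to exploit two standard facts: the linearity of the expectation operator (which lets it pass through a Kronecker product when one factor is deterministic) and the orthogonality of the polynomial chaos basis stated in \eqn{basisFcn}.

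First I would pull the deterministic factor outside the expectation. Since $\I{n}$ does not depend on $\param$, linearity of $\Exp{\cdot}$ combined with bilinearity of the Kronecker product gives $\Exp{(\mo{\Phi}\mo{\Phi}^T)\otimes \I{n}} = \Exp{\mo{\Phi}\mo{\Phi}^T} \otimes \I{n}$, so the problem reduces to computing $\Exp{\mo{\Phi}\mo{\Phi}^T}$.

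Next I would examine the entries of $\mo{\Phi}\mo{\Phi}^T$. From the definition $\mo{\Phi} = \begin{pmatrix}\basis{0}{\param} & \cdots & \basis{N}{\param}\end{pmatrix}^T$, the $(i,j)$ entry of the outer product is simply $\phi_i(\param)\phi_j(\param)$. Applying the expectation entrywise and invoking the orthogonality relation \eqn{basisFcn}, namely $\Exp{\phi_i\phi_j} = h_i^2 \delta_{ij} = \Exp{\phi_i^2}\delta_{ij}$, shows that $\Exp{\mo{\Phi}\mo{\Phi}^T}$ is the diagonal matrix $\diag\bigl(\Exp{\phi_i^2}\bigr)$.

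Substituting this back into the expression from the first step yields the claim. There is no real obstacle here; the result is essentially an immediate corollary of Proposition~1 combined with the orthogonality of the basis. The only thing to be a bit careful about is making the indexing consistent, so that the diagonal operator produces an $(N+1)\times(N+1)$ matrix whose $i$th diagonal entry is $\Exp{\phi_i^2}$, matching the size of $\mo{\Phi}\mo{\Phi}^T$ before the Kronecker product with $\I{n}$ is reintroduced.
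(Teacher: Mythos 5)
Your proposal is correct and matches the paper's intent: the paper's proof is simply ``Straight forward using \eqn{basisFcn},'' and your argument---pull the deterministic $\I{n}$ factor out by linearity, then evaluate $\Exp{\mo{\Phi}\mo{\Phi}^T}$ entrywise via the orthogonality relation $\Exp{\phi_i\phi_j}=h_i^2\delta_{ij}$---is exactly the computation being alluded to. No gaps.
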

\begin{proof} Straight forward using \eqn{basisFcn}.\end{proof}

\begin{proposition}
For any matrix $\A \in \real^{n\times n}$ and $\mo{\Phi}_n$, $\mo{\Phi}$ as defined earlier
\begin{equation}
	\mo{\Phi} \otimes \left(\A \mo{\Phi}_n^T\right) = \left(\mo{\Phi}\mo{\Phi}^T\right) \otimes \A. \eqnlabel{id2}
\end{equation}
\end{proposition}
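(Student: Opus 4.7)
The plan is to imitate the proof of Proposition 1 very closely, using associativity of the Kronecker product together with the mixed-product property $(AC)\otimes(BD)=(A\otimes B)(C\otimes D)$, and then to recognize a Kronecker product of a column and a row as an ordinary outer product.

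First I would unfold $\mo{\Phi}_n^T$ using its definition, i.e.\ $\mo{\Phi}_n^T = \mo{\Phi}^T \otimes \I{n}$, so that $\A\mo{\Phi}_n^T = \A(\mo{\Phi}^T \otimes \I{n})$. Next I would rewrite $\A = 1\otimes \A$ (treating $\A$ as an $n\times n$ block and $1$ as a $1\times 1$ scalar) and apply the mixed-product property to get
\begin{equation*}
\A\mo{\Phi}_n^T = (1\otimes \A)(\mo{\Phi}^T \otimes \I{n}) = (1\cdot \mo{\Phi}^T) \otimes (\A \cdot \I{n}) = \mo{\Phi}^T \otimes \A.
\end{equation*}

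Then I would substitute this into the left-hand side of \eqn{id2} and apply associativity of the Kronecker product:
\begin{equation*}
\mo{\Phi}\otimes(\A\mo{\Phi}_n^T) = \mo{\Phi}\otimes(\mo{\Phi}^T \otimes \A) = (\mo{\Phi}\otimes\mo{\Phi}^T)\otimes \A.
\end{equation*}
Finally, I would note that since $\mo{\Phi}$ is a column vector and $\mo{\Phi}^T$ is a row vector, the Kronecker product $\mo{\Phi}\otimes\mo{\Phi}^T$ coincides entry-wise with the outer product $\mo{\Phi}\mo{\Phi}^T$, giving the desired identity.

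There is no real obstacle here; the only small care point is the last step of collapsing $\mo{\Phi}\otimes\mo{\Phi}^T$ into $\mo{\Phi}\mo{\Phi}^T$, which can be justified by a one-line index argument (or by the same ``$\mo{\Phi}\cdot 1\otimes 1\cdot\mo{\Phi}^T$'' trick used in Proposition 1) if one wants to be fully explicit. The proof is essentially two applications of standard Kronecker identities and mirrors the structure already established for Proposition 1.
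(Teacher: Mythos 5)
Your proof is correct, but it takes a slightly different route from the paper's. The paper inserts an identity on the left factor, writing $\mo{\Phi}\otimes(\A\mo{\Phi}_n^T) = (\I{N+1}\mo{\Phi})\otimes(\A\mo{\Phi}_n^T) = (\I{N+1}\otimes\A)(\mo{\Phi}\otimes\mo{\Phi}_n^T)$, then invokes Proposition 1 to replace $\mo{\Phi}\otimes\mo{\Phi}_n^T$ by $(\mo{\Phi}\mo{\Phi}^T)\otimes\I{n}$, and finishes with one more application of the mixed-product rule. You instead simplify the inner factor first, showing $\A\mo{\Phi}_n^T = \mo{\Phi}^T\otimes\A$ (which is, incidentally, exactly the opening computation in the paper's later Proposition 3 with $\mo{M}=\A$), then use associativity and the column-times-row collapse $\mo{\Phi}\otimes\mo{\Phi}^T = \mo{\Phi}\mo{\Phi}^T$. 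Both arguments are two mixed-product applications plus associativity and are equally rigorous; the paper's version buys economy by reusing Proposition 1 as a black box, while yours is self-contained and arguably cleaner in that it isolates the reusable identity $\A\mo{\Phi}_n^T=\mo{\Phi}^T\otimes\A$, which the paper ends up needing anyway for Proposition 3. Your final ``care point'' about justifying $\mo{\Phi}\otimes\mo{\Phi}^T=\mo{\Phi}\mo{\Phi}^T$ is handled by precisely the $(\mo{\Phi}\otimes 1)(1\otimes\mo{\Phi}^T)$ trick from Proposition 1's proof, as you note, so there is no gap.
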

\begin{proof}
\begin{align*}
	\mo{\Phi} \otimes \left(\A \mo{\Phi}_n^T\right) & = \left(\I{N+1}\mo{\Phi}\right) \otimes \left(\A \mo{\Phi}_n^T\right) \\
	& = (\I{N+1} \otimes \A)(\mo{\Phi}\otimes\mo{\Phi}_n^T) \\
	& = (\I{N+1} \otimes \A)\left((\mo{\Phi}\mo{\Phi}^T)\otimes \I{n}\right) \\
	& = (\I{N+1} \mo{\Phi}\mo{\Phi}^T) \otimes (\A \I{n})\\
	& = (\mo{\Phi}\mo{\Phi}^T) \otimes \A.
\end{align*}	
\end{proof}
With \eqn{id1} and \eqn{id2}, we can write \eqn{pcDynamics} as
\begin{align}
	\xpcdot &= \Apc \xpc, \eqnlabel{pcDynamics:final} 
\end{align}	
where 
\begin{align}
\Apc & = \Exp{\left(\mo{\Phi}\mo{\Phi}^T\right)\otimes \I{n}}^{-1} \Exp{\left(\mo{\Phi}\mo{\Phi}^T\right) \otimes \A}, \eqnlabel{Apc}
\end{align}
and $\mo{\Phi}$, $\A$ depend on $\param$ as defined earlier. 

Let us also introduce notations $\mo{\bar{A}}:=\Exp{\mo{\Phi}_n\A\mo{\Phi}_n^T}$, $\mo{\bar{B}}:=\Exp{\mo{\Phi}_n\B\mo{\Phi}_m^T}$, $\mo{\bar{Q}}:=\Exp{\mo{\Phi}_n\mo{Q}\mo{\Phi}_n^T}$, and $\mo{\bar{R}}:=\Exp{\mo{\Phi}_m\mo{R}\mo{\Phi}_m^T}$, to compactly present the results below.

\section{Stability}
In this section, we study the exponential stability of the second mean for the dynamical system in \eqn{simpleSys}. The equilibrium solution is said to possess exponential stability of the $m^{\text{th}}$ mean if $\exists\,\delta > 0$ and constants $\alpha>0,\beta>0$ such that $\|\x_0\| < \delta$ implies $\forall t\geq t_0$
\begin{align}
\Exp{\|\x(t;\x_0,t_0)\|^m_m} \leq \beta \Exp{\|\x_0\|_m^m} e^{-\alpha (t-t_0)}.
\end{align}

It can be trivially shown that the dynamical system in \eqn{simpleSys}, with random variables $\param$, is exponentially stable in the $2^{\text{nd}}$ mean, or exponentially stable in the mean square sense (EMS-stable), if $\exists$ a Lyapunov function $V(\x):=\x^T\P\x$, with $\P=\P^T>0$, and $\alpha>0$ such that 
	\begin{align}
		\Exp{\dot{V}} \leq -\alpha \Exp{V}. \eqnlabel{stab_condition}
	\end{align}

As the sample properties are well defined, the derivation is obtained by mimicking the proof for deterministic systems. \Eqn{stab_condition} implies
\begin{align*}
	\oderiv{\Exp{V}}{t} &\leq -\alpha \Exp{V}, \\
	\Rightarrow  \Exp{\x^T\P\x} &\leq \Exp{\x_0^T\P\x_0}e^{-\alpha (t-t_0)}.	
\end{align*}
Recall for $\P=\P^T$, 
\[
\lambda_{\text{min}}(\P) \| \x \|_2^2 \leq \x^T\P\x \leq \lambda_{\text{max}}(\P) \| \x \|_2^2,
\]
\begin{align*}
\Rightarrow &\lambda_{\text{min}}(\P) \Exp{\|\x\|_2^2} \leq \lambda_{\text{max}}(\P) \Exp{\| \x_0 \|_2^2} e^{-\alpha (t-t_0)},\\
\Rightarrow & \Exp{\|\x\|_2^2} \leq \kappa(\P) \Exp{\| \x_0 \|_2^2} e^{-\alpha (t-t_0)}.
\end{align*}

EMS-stability condition in the polynomial chaos framework is presented next.

\begin{proposition}
	For any matrix $\mo{M}\in\real^{m\times n}$ and $\mo{\Phi}_n$ as defined earlier
	\begin{equation}
		\mo{M}\mo{\Phi}_n^T = \mo{\Phi}^T_m (\I{N+1}\otimes\mo{M}). \eqnlabel{prop3}
	\end{equation}
\end{proposition}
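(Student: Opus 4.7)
The plan is to unfold both sides into a common compact Kronecker form using only the two mixed-product identities already invoked in Propositions 1 and 2, namely $(A\otimes B)(C\otimes D)=(AC)\otimes(BD)$ and the compatibility of scalar $1\times 1$ blocks with the Kronecker product. Since $\mo{\Phi}_n = \mo{\Phi}\otimes\I{n}$ and $\mo{\Phi}_m = \mo{\Phi}\otimes\I{m}$, we have $\mo{\Phi}_n^T = \mo{\Phi}^T \otimes \I{n}$ and $\mo{\Phi}_m^T = \mo{\Phi}^T \otimes \I{m}$, so the identity to prove reduces to showing
\[
\mo{M}(\mo{\Phi}^T\otimes\I{n}) \;=\; (\mo{\Phi}^T\otimes\I{m})(\I{N+1}\otimes\mo{M}),
\]
and both expressions have size $m\times n(N+1)$, so a dimension check is the first thing I would record.

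Next I would simplify the right-hand side directly. Applying $(A\otimes B)(C\otimes D)=(AC)\otimes(BD)$ with $A=\mo{\Phi}^T$, $B=\I{m}$, $C=\I{N+1}$, $D=\mo{M}$ gives $(\mo{\Phi}^T\I{N+1})\otimes(\I{m}\mo{M}) = \mo{\Phi}^T\otimes\mo{M}$.

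Then I would push the left-hand side into the same canonical form. The one mild trick is to view $\mo{M}$ as a Kronecker product against the $1\times 1$ identity: $\mo{M} = 1\otimes\mo{M}$. This exposes the same mixed-product structure, so
\[
\mo{M}(\mo{\Phi}^T\otimes\I{n}) \;=\; (1\otimes\mo{M})(\mo{\Phi}^T\otimes\I{n}) \;=\; (1\cdot\mo{\Phi}^T)\otimes(\mo{M}\,\I{n}) \;=\; \mo{\Phi}^T\otimes\mo{M},
\]
which matches the right-hand side and closes the argument.

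The only real obstacle is bookkeeping: making sure that the Kronecker factors have compatible inner dimensions so that the mixed-product rule actually applies on each line (in particular, that the $(N+1)$ dimension lives on the left factor on both sides and the $n$ and $m$ dimensions live on the right factor). Once that alignment is made explicit, the proof is a two-line application of the identities already used to prove Propositions 1 and 2, and no new machinery is needed.
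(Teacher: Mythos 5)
Your proof is correct and follows essentially the same route as the paper's: both reduce each side to the common form $\mo{\Phi}^T\otimes\mo{M}$ via the mixed-product rule, with the $1\otimes\mo{M}$ trick on the left and the factorization $(\mo{\Phi}^T\I{N+1})\otimes(\I{m}\mo{M})$ on the right. The paper merely writes it as a single chain from left-hand side to right-hand side rather than meeting in the middle.
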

\begin{proof}
	\begin{align*}
		\mo{M}\mo{\Phi}_n^T  & = (1 \otimes \mo{M})(\mo{\Phi}^T \otimes \I{n})  \\ &= \mo{\Phi}^T \otimes \mo{M} = (\mo{\Phi}^T\I{N+1}) \otimes (\I{m}\mo{M})\\
		&= (\mo{\Phi}^T \otimes \I{m})(\I{N+1}\otimes \mo{M}) \\ & = \mo{\Phi}^T_m (\I{N+1}\otimes\mo{M}).
	\end{align*}
\end{proof}

\begin{corollary}
	\begin{equation}
		\mo{\Phi}_n\mo{M}^T = (\I{N+1}\otimes\mo{M}^T)\mo{\Phi}_m. \eqnlabel{cor2}
	\end{equation}
\end{corollary}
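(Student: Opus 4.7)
The plan is to obtain this corollary as an immediate transpose of Proposition 3, rather than to redo a Kronecker-product computation from scratch. Specifically, Proposition 3 asserts
\[
\mo{M}\mo{\Phi}_n^T = \mo{\Phi}^T_m (\I{N+1}\otimes\mo{M})
\]
for any $\mo{M}\in\real^{m\times n}$. The right-hand side of the corollary we wish to establish has the factors $(\I{N+1}\otimes\mo{M}^T)$ and $\mo{\Phi}_m$ in exactly the order one would get by transposing Proposition 3, so I would simply apply transpose to both sides.

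First I would transpose the left-hand side, using $(AB)^T = B^TA^T$, to get $(\mo{M}\mo{\Phi}_n^T)^T = \mo{\Phi}_n\mo{M}^T$, which is the left-hand side of \eqn{cor2}. Next I would transpose the right-hand side, again using $(AB)^T = B^TA^T$, to obtain $(\mo{\Phi}^T_m (\I{N+1}\otimes\mo{M}))^T = (\I{N+1}\otimes\mo{M})^T \mo{\Phi}_m$. Finally, I would invoke the standard Kronecker identity $(A\otimes B)^T = A^T\otimes B^T$ together with $\I{N+1}^T = \I{N+1}$ to reduce $(\I{N+1}\otimes\mo{M})^T$ to $(\I{N+1}\otimes\mo{M}^T)$. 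Equating the two transposed sides yields exactly \eqn{cor2}.

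There is really no obstacle here: the only small thing to keep straight is the dimensional bookkeeping, since $\mo{\Phi}_n$ is built from $\I{n}$ while $\mo{\Phi}_m$ is built from $\I{m}$, and Proposition 3 is stated for $\mo{M}\in\real^{m\times n}$, so that $\mo{M}^T\in\real^{n\times m}$ correctly matches the block sizes of $\mo{\Phi}_m$ on the right and $\mo{\Phi}_n$ on the left in the corollary. As an alternative (essentially equivalent) route, one could instead apply Proposition 3 with $\mo{M}$ replaced by $\mo{M}^T$ (which is $n\times m$) and with the roles of $m$ and $n$ swapped, and then transpose; this gives the same identity but is slightly less direct, so I would prefer the plain transpose argument above.
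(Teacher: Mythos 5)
Your proof is correct and is exactly the paper's argument: the paper's entire proof reads ``Take transpose of \eqn{prop3},'' and you have simply filled in the details of that transposition, including the Kronecker identity $(A\otimes B)^T = A^T\otimes B^T$ and the dimensional bookkeeping. Nothing further is needed.
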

\begin{proof}
Take transpose of \eqn{prop3}.
\end{proof}
\vspace{5mm}
\begin{theorem}
The dynamical system in \eqn{simpleSys} is EMS-stable if $\exists$ $\P=\P^T>0$ and $\alpha > 0$ such that 
\begin{align}
	\Abar^T\set{P} + \set{P}\Abar + \alpha\Exp{\mo{\Phi}_n\mo{\Phi}_n^T}\set{P}\leq 0, \eqnlabel{pc_stab}
\end{align}
where $\set{P}:=\I{N+1}\otimes \P$.
\end{theorem}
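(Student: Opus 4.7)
The plan is to apply the quadratic Lyapunov ansatz $V = \hat{\x}^T\P\hat{\x}$ to the polynomial chaos approximation $\hat{\x} = \mo{\Phi}_n^T\xpc$ and then push all of the $\param$-dependence into pre-computable expectations of products of $\mo{\Phi}_n$, thereby collapsing the stochastic condition $\Exp{\dot V}\le -\alpha\Exp{V}$ into a deterministic matrix inequality in the enlarged coordinate $\xpc$.

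First I would rewrite $\Exp{V}$. Since $\xpc$ is deterministic, $\Exp{V} = \xpc^T\Exp{\mo{\Phi}_n\P\mo{\Phi}_n^T}\xpc$. Applying Corollary 2 with $\mo{M}=\P$ (so that $\mo{\Phi}_n\P=\set{P}\mo{\Phi}_n$, where $\set{P}=\I{N+1}\otimes\P$) converts the middle factor into $\set{P}\Exp{\mo{\Phi}_n\mo{\Phi}_n^T}$. By Corollary 1, $\Exp{\mo{\Phi}_n\mo{\Phi}_n^T}=\diag(\Exp{\phi_i^2})\otimes\I{n}$ is block diagonal, so it commutes with $\set{P}$ and I may equally write $\Exp{V}=\xpc^T\Exp{\mo{\Phi}_n\mo{\Phi}_n^T}\set{P}\xpc$.

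Second, differentiating along $\dot{\x}=\A\x$ gives $\dot V = \hat{\x}^T(\A^T\P+\P\A)\hat{\x} = \xpc^T\mo{\Phi}_n(\A^T\P+\P\A)\mo{\Phi}_n^T\xpc$. I would invoke Proposition 3 on the right of the $\A^T$-term (rewriting $\P\mo{\Phi}_n^T = \mo{\Phi}_n^T\set{P}$) and Corollary 2 on the left of the $\A$-term (rewriting $\mo{\Phi}_n\P = \set{P}\mo{\Phi}_n$). After taking expectations and using $\Abar = \Exp{\mo{\Phi}_n\A\mo{\Phi}_n^T}$, this simplifies to $\Exp{\dot V} = \xpc^T(\Abar^T\set{P}+\set{P}\Abar)\xpc$. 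Combining the two identities, the Lyapunov inequality $\Exp{\dot V}+\alpha\Exp{V}\le 0$ becomes $\xpc^T\bigl[\Abar^T\set{P}+\set{P}\Abar+\alpha\Exp{\mo{\Phi}_n\mo{\Phi}_n^T}\set{P}\bigr]\xpc \le 0$, and since $\xpc$ is arbitrary this is exactly \eqn{pc_stab}. Integrating the resulting scalar differential inequality yields $\Exp{V(t)}\le\Exp{V(t_0)}e^{-\alpha(t-t_0)}$, and sandwiching $V$ between $\lambda_{\min}(\P)\|\hat{\x}\|_2^2$ and $\lambda_{\max}(\P)\|\hat{\x}\|_2^2$ then delivers the EMS bound with $\beta = \kappa(\P)$, mirroring the derivation sketched just before the theorem statement.

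The main obstacle is the Kronecker-product bookkeeping: Proposition 3 is stated for a rectangular $\mo{M}$, and each time $\P$ is slid past a $\mo{\Phi}_n$ one must set $m=n$ and verify that the $\set{P}$ produced has the intended $(N+1)n\times(N+1)n$ block structure. A secondary conceptual point is that the calculation bounds the second moment of the truncated expansion $\hat{\x}$ rather than of $\x$ itself; transferring the bound to the original random state appeals to the $\mathcal{L}_2$ convergence of the Cameron--Martin expansion already invoked to justify the polynomial chaos representation.
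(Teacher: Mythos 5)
Your proposal is correct and follows essentially the same route as the paper: the quadratic Lyapunov function $V=\x^T\P\x$ written in the coordinate $\xpc$ via $\hat{\x}=\mo{\Phi}_n^T\xpc$, with Proposition 3 and Corollary 2 used to slide $\P$ past $\mo{\Phi}_n$ and turn $\Exp{\dot V}\le-\alpha\Exp{V}$ into \eqn{pc_stab}. Your closing remark about the gap between bounding the second moment of the truncated $\hat{\x}$ versus that of $\x$ itself is a point the paper passes over silently, so flagging the appeal to Cameron--Martin convergence is a small but genuine improvement in rigor.
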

\begin{proof}
	With $V(\x):=\x^T\P\x$, and $\P = \P^T > 0$, 
	\begin{align*}
		\dot{V}&= \dot{\x}^T\P\x + \x^T\P\dot{\x}\\
		& = \x^T\left(\A^T\P + \P\A\right)\x \\
		&= \xpc^T\left(\mo{\Phi}_n\A^T\P\mo{\Phi}_n^T + \mo{\Phi}_n\P\A\mo{\Phi}_n^T\right)\xpc.
	\end{align*}
	Using \eqn{prop3} and \eqn{cor2}, replace $\P\mo{\Phi}_n^T$ and $\mo{\Phi}_n\P$ by $\mo{\Phi}_n^T\set{P}$ and $\set{P}\mo{\Phi}_n$ respectively. This gives us
	\begin{align*}
	\dot{V} = \xpc^T\left(\mo{\Phi}_n\A^T\mo{\Phi}_n^T\set{P} +  \set{P} \mo{\Phi}_n\A^T\mo{\Phi}_N^T  \right)\xpc
	\end{align*}
	Similarly, 
	\[V = \x^T\P\x = \xpc^T\mo{\Phi}_n\P\mo{\Phi}_n^T\xpc = \xpc^T\mo{\Phi}_n\mo{\Phi}_n^T\set{P}\xpc.\]
	$\Exp{\dot{V}} \leq -\alpha \Exp{V}$ is equivalent to 
	\begin{align*}
	\Abar^T\set{P} + \set{P}\Abar + \alpha\Exp{\mo{\Phi}_n\mo{\Phi}_n^T}\set{P}\leq 0.
	\end{align*}
\end{proof}
\Eqn{pc_stab} is a convex constraint in $\P$ and $\alpha$ (pg. 11, \cite{Boydetal:1994}).

In our previous work \cite{fisher2009linear}, we presented stability conditions with Lyapunov function defined as $V(\xpc):=\xpc^T \P \xpc$, and studied stability of the deterministic system in \eqn{pcDynamics:final}. The dynamics and the Lyapunov function were both deterministic and standard Lyapunov arguments were used to analyze stability. Stability of \eqn{simpleSys} was then inferred by showing that 
\[\lim_{t\rightarrow \infty} \xpc(t)\rightarrow 0 \Rightarrow \Exp{\|\x\|^m_m} \rightarrow 0.\] Here we analyze the stability of \eqn{simpleSys} directly, with Lyapunov functions of the type $V(\x):=\x^T \P \x$. 

\section{Controller Synthesis}
In this section, condition for a state feedback controller $\K$ that achieves EMS-stability for the system in \eqn{contiDyn}, is presented. Let $V(\x):=\x^T\P\x$, where $\P = \P^T > 0$, be the Lyapunov function that certifies this. The closed-loop system with control $\u = \K\x$ is therefore
\begin{align}
	\xdot = \Bigl(\A(\param) + \B(\param)\K\Bigr)\x \eqnlabel{clp}.
\end{align}
\begin{theorem}
Closed loop system in \eqn{clp} is EMS-stable if 
\begin{align}
\set{Y}\Abar^T + \Abar\set{Y} + \set{W}^T\Bbar^T +\Bbar\set{W} + \alpha\set{Y}\Exp{\mo{\Phi}_n\mo{\Phi}_n^T}\leq 0, \eqnlabel{lmi}
\end{align}
where $\set{W} := \I{N+1}\otimes \W$, $\set{Y} := \I{N+1}\otimes \Y$, $\Y := \P^{-1}$, and $\W:=\K\Y$. 
\end{theorem}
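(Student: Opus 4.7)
The plan is to derive the LMI \eqn{lmi} by applying Theorem 1 to the closed-loop dynamics \eqn{clp}, and then converting the resulting bilinear matrix inequality into a linear matrix inequality via a standard congruence transformation with $\set{Y} = \set{P}^{-1}$.

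First, I would apply Theorem 1 directly to the system \eqn{clp} with $\A(\param)$ replaced by the closed-loop matrix $\A(\param) + \B(\param)\K$. This yields
\begin{align*}
\Exp{\mo{\Phi}_n(\A+\B\K)^T\mo{\Phi}_n^T}\set{P} + \set{P}\Exp{\mo{\Phi}_n(\A+\B\K)\mo{\Phi}_n^T} + \alpha\Exp{\mo{\Phi}_n\mo{\Phi}_n^T}\set{P} \leq 0.
\end{align*}
The key manipulation is to simplify the cross term $\Exp{\mo{\Phi}_n\B\K\mo{\Phi}_n^T}$. Since $\K$ is a deterministic constant matrix in $\real^{m\times n}$, Proposition 3 gives $\K\mo{\Phi}_n^T = \mo{\Phi}_m^T(\I{N+1}\otimes\K)$. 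Setting $\set{K} := \I{N+1}\otimes \K$, this pulls the deterministic $\set{K}$ outside the expectation, yielding $\Exp{\mo{\Phi}_n\B\K\mo{\Phi}_n^T} = \Exp{\mo{\Phi}_n\B\mo{\Phi}_m^T}\set{K} = \Bbar\set{K}$. Hence the closed-loop stability condition becomes
\begin{align*}
(\Abar + \Bbar\set{K})^T\set{P} + \set{P}(\Abar + \Bbar\set{K}) + \alpha\Exp{\mo{\Phi}_n\mo{\Phi}_n^T}\set{P} \leq 0,
\end{align*}
which is bilinear in the unknowns $\P$ and $\K$.

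Next, I would linearize this BMI by pre- and post-multiplying by $\set{Y} := \set{P}^{-1} = \I{N+1}\otimes\P^{-1} = \I{N+1}\otimes\Y$. Because $\set{Y}$ is symmetric positive definite, the inequality direction is preserved. The products $\set{Y}\set{P} = \set{P}\set{Y} = \I{}$ eliminate the $\set{P}$ factors in the first four terms, and the composite variable $\W := \K\Y$ arises naturally via $\Bbar\set{K}\set{Y} = \Bbar(\I{N+1}\otimes\K\Y) = \Bbar\set{W}$ (and its transpose). For the $\alpha$ term, I would observe that $\Exp{\mo{\Phi}_n\mo{\Phi}_n^T}$ is block diagonal of the form $\diag(\Exp{\phi_i^2})\otimes\I{n}$ by Corollary 1, and so commutes with $\set{Y} = \I{N+1}\otimes\Y$; together with $\set{P}\set{Y} = \I{}$, this leaves the term $\alpha\set{Y}\Exp{\mo{\Phi}_n\mo{\Phi}_n^T}$ as stated in \eqn{lmi}.

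The main obstacle I anticipate is careful bookkeeping of the Kronecker structure: in particular, verifying that the congruence correctly commutes $\set{Y}$ past $\Exp{\mo{\Phi}_n\mo{\Phi}_n^T}$ and that the lifting $\K \mapsto \set{K}$, $\Y\mapsto\set{Y}$, $\W\mapsto\set{W}$ is consistent so that $\set{W} = \set{K}\set{Y}$. Once these Kronecker identities are in place the result is a direct congruence transformation of the BMI from Theorem 1 specialized to the closed loop. Convexity of the final inequality in $(\Y,\W,\alpha)$ then follows exactly as for \eqn{pc_stab}.
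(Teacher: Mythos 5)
Your proposal is correct and follows essentially the same route as the paper: the paper expands $\dot V$ for the closed loop directly rather than citing Theorem 1, but it arrives at the identical bilinear inequality $\Abar^T\set{P} + \set{P}\Abar + \set{K}^T\Bbar^T\set{P} + \set{P}\Bbar\set{K} + \alpha\Exp{\mo{\Phi}_n\mo{\Phi}_n^T}\set{P}\leq 0$ via the same use of Proposition 3 and its corollary to factor the deterministic $\set{K}$ out of the expectation. The convexification step --- verifying $\set{W}=\set{K}\set{Y}$, $\set{P}=\set{Y}^{-1}$ and pre/post-multiplying by $\set{Y}$ --- is exactly the paper's argument.
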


\begin{proof}
With $V(\x):=\x^T\P\x$, where $\P = \P^T > 0$,
\begin{align*}
\dot{V} &= \x^T\left[\A^T\P +\P\A + \K^T\B^T\P + \P\B\K \right]\x, \\
& = \xpc^T \left[\mo{\Phi}_n\A^T\P\mo{\Phi}^T_n + \mo{\Phi}_n\P\A\mo{\Phi}^T_n + \right. \\ & \left.\mo{\Phi}_n\K^T\B^T\P\mo{\Phi}^T_n + \mo{\Phi}_n\P\B\K\mo{\Phi}^T_n \right]\xpc.
\end{align*}
Using \eqn{prop3} and \eqn{cor2}, we get
\begin{align*}
	\dot{V} &= \xpc^T \left[\mo{\Phi}_n\A^T\mo{\Phi}^T_n\set{P} + \set{P}\mo{\Phi}_n\A\mo{\Phi}^T_n + \set{K}^T\mo{\Phi}_n\B^T\mo{\Phi}\set{P}^T_n + \set{P}\mo{\Phi}_n\B\mo{\Phi}^T_n\set{K} \right]\xpc,
\end{align*}
where $\set{K} := \I{N+1}\otimes \K$. Therefore, $\Exp{\dot{V}} \leq -\alpha \Exp{V}$ is equivalent to 
\begin{align}
\Abar^T\set{P} + \set{P}\Abar + \set{K}^T\Bbar^T\set{P} + \set{P}\Bbar\set{K} + \alpha\Exp{\mo{\Phi}_n\mo{\Phi}_n^T}\set{P}\leq 0. \eqnlabel{bmi}
\end{align}
The above equation is nonconvex in $\set{P}$ and $\set{K}$ and can be convexified using the well known substitutions \cite{bernussou1989linear} $\Y := \P^{-1}$, and $\W:=\K\Y$. These substitutions can be written in terms of $\set{P}, \set{K}, \set{Y}$, and $\set{W}$ as
\begin{align*}
\set{W} & = \I{N+1}\otimes \W = \I{N+1}\I{N+1}\otimes \K\Y \\
& = (\I{N+1}\otimes \K)(\I{N+1}\otimes \Y) \\
& = \set{K}\set{Y}.
\end{align*}
It is also straightforward to show $\set{P} = \set{Y}^{-1}$ and $\set{K} = \set{W}\set{Y}^{-1}$. Substituting these in \eqn{bmi}, and pre-post multiplying by $\set{Y}$, we get the following convex inequality in $\Y, \W$ and $\alpha$
\begin{align*}
	&\set{Y}\Abar^T + \Abar\set{Y} + \set{W}^T\Bbar^T+\Bbar\set{W} + \alpha\set{Y}\Exp{\mo{\Phi}_n\mo{\Phi}_n^T}\leq 0. 
\end{align*}
\end{proof}
The performance parameter $\alpha$ can be maximized, along with \eqn{lmi}, as a generalized eigen-value problem (GEVP) (pg. 11, \cite{Boydetal:1994}).

\begin{theorem}
A fixed gain $\K$ that minimizes
\begin{equation}
\Exp{\int_0^\infty (\x^T\mo{Q}\x + \u^T\mo{R}\u)dt}
\eqnlabel{cost}
\end{equation}
subject to $\u = \K\x$ and dynamics given by \eqn{contiDyn} can be synthesized in the polynomial chaos framework by solving the following convex optimization problem
\[
\max \textbf{tr} \Y
\]
subject to
\[
\begin{bmatrix}
\set{Y}\mo{\bar{A}}^T + \mo{\bar{A}}\set{Y} + \set{W}^T\mo{\bar{B}}^T + \mo{\bar{B}}\set{W} & \set{Y} & \set{W}^T \\ 
\set{Y} & -\mo{\bar{Q}}{-1} & \mo{0} \\
\set{W} & \mo{0} & -\mo{\bar{R}}^{-1} 
\end{bmatrix} \leq 0.
\]
\end{theorem}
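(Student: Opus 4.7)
My plan is to adapt the standard LQR--via--LMI derivation to the polynomial chaos setting, re-using the machinery already developed in Theorem~2. Starting from the candidate Lyapunov function $V(\x):=\x^T\P\x$ with $\P=\P^T>0$, I would look for $(\P,\K)$ satisfying the tightened ``Hamilton--Jacobi'' inequality
\begin{align*}
\dot V(\x) \;\leq\; -\bigl(\x^T\mo{Q}\x + \u^T\mo{R}\u\bigr),\qquad \u = \K\x,
\end{align*}
along trajectories of \eqn{clp}. Taking expectation and integrating from $0$ to $\infty$, EMS-stability forces $\Exp{V(\x(t))}\to 0$, so that
\begin{align*}
\Exp{\int_0^\infty (\x^T\mo{Q}\x + \u^T\mo{R}\u)\,dt} \;\leq\; \Exp{\x_0^T\P\x_0},
\end{align*}
which turns the control-synthesis problem into one of minimizing the upper bound $\Exp{\x_0^T\P\x_0}$ subject to the Lyapunov inequality.

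Next I would expand the Lyapunov inequality in PC coordinates exactly as in the proof of Theorem~2: write $\x=\mo{\Phi}_n^T\xpc$, use Proposition~3 and its Corollary to pull $\P$, $\K$, $\mo{Q}$, $\mo{R}$ out of the $\mo{\Phi}_n$'s (so that $\P\mo{\Phi}_n^T\mapsto\mo{\Phi}_n^T\set{P}$, $\K\mo{\Phi}_n^T\mapsto\mo{\Phi}_m^T\set{K}$, etc.), then take expectations using the definitions of $\Abar,\Bbar,\Qbar,\Rbar$. The inequality $\Exp{\dot V}\leq -\Exp{\x^T\mo{Q}\x+\u^T\mo{R}\u}$ (which must hold for all $\xpc$) becomes the matrix inequality
\begin{align*}
\Abar^T\set{P}+\set{P}\Abar+\set{K}^T\Bbar^T\set{P}+\set{P}\Bbar\set{K}+\Qbar+\set{K}^T\Rbar\set{K}\;\leq\;0,
\end{align*}
which is nonconvex in $(\set{P},\set{K})$.

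To convexify, I would apply exactly the same substitutions as in Theorem~2, namely $\Y:=\P^{-1}$, $\W:=\K\Y$, lifted to $\set{Y}=\I{N+1}\otimes\Y=\set{P}^{-1}$ and $\set{W}=\I{N+1}\otimes\W=\set{K}\set{Y}$, and pre- and post-multiply by $\set{Y}$. This converts the linear-in-$\set{P}$ terms into the affine block
$\set{Y}\Abar^T+\Abar\set{Y}+\set{W}^T\Bbar^T+\Bbar\set{W}$,
and turns the state- and control-weighting terms into quadratic terms $\set{Y}\Qbar\set{Y}$ and $\set{W}^T\Rbar\set{W}$. Recognizing these as the sum $[\set{Y}\ \ \set{W}^T]\,\mathrm{blkdiag}(\Qbar,\Rbar)\,[\set{Y}\ \ \set{W}^T]^T$, a single Schur-complement step with respect to the block $\mathrm{blkdiag}(\Qbar^{-1},\Rbar^{-1})$ produces the $3\times3$ block LMI stated in the theorem.

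Finally, to justify the objective $\max\,\mathbf{tr}\,\Y$, I would observe that the bound $\Exp{\x_0^T\P\x_0}$ on the cost is a monotone-increasing function of $\P$ in the Loewner order, so one wants $\P$ as small, equivalently $\Y=\P^{-1}$ as large, as possible; maximizing $\mathbf{tr}\,\Y$ is the standard scalar surrogate for enlarging $\Y$ in the positive-semidefinite cone and preserves the problem's convexity. I expect the main bookkeeping obstacle to be exactly the Schur-complement step: making sure the quadratic terms $\set{Y}\Qbar\set{Y}$ and $\set{W}^T\Rbar\set{W}$ are assembled into a single quadratic form with the right block structure, and that $\Qbar,\Rbar\succ 0$ so that their inverses appear in the $(2,2)$ and $(3,3)$ blocks of the final LMI.
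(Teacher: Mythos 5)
Your proposal is correct and follows essentially the same route as the paper's proof: the Lyapunov dissipation inequality $\Exp{\dot V}\leq -\Exp{\x^T\mo{Q}\x+\u^T\mo{R}\u}$, expansion in polynomial chaos coordinates via Proposition~3 and its Corollary to obtain the nonconvex inequality in $(\set{P},\set{K})$, the substitutions $\Y=\P^{-1}$, $\W=\K\Y$ with congruence by $\set{Y}$, and a Schur complement on the $\set{Y}\Qbar\set{Y}+\set{W}^T\Rbar\set{W}$ terms. You actually supply more detail than the paper on why minimizing a scalar function of $\P$ (equivalently maximizing $\mathbf{tr}\,\Y$) bounds the cost, which the paper asserts without the integration argument.
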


\begin{proof}
The problem is equivalent to solving
$ \min_{\P,\K} \textbf{tr} \P$, such that $\oderiv{}{t} \Exp{\x^T\P\x} \leq -\Exp{\x^T(\mo{Q} + \K^T\mo{R}\K)\x}$.
In the polynomial chaos framework, the constraint can be written as 
\begin{align*}
&\Exp{\mo{\Phi}_n(\A + \B\K)^T\P\mo{\Phi}_n^T} + \Exp{\mo{\Phi}_n\P(\A + \B\K)\mo{\Phi}_n^T} +\Exp{\mo{\Phi}_n (\mo{Q} + \K^T\mo{R}\K) \mo{\Phi}_n^T} \leq 0
\end{align*}
Using \eqn{prop3} and \eqn{cor2}, we get
\begin{align*}
& \Abar^T\set{P} + \set{P}\Abar + \set{K}^T\Bbar^T\set{P} + \set{P} \Bbar \set{K} + \Qbar + \set{K}^T\Rbar\set{K} \leq 0.
\end{align*}

Substituting $\Y = \P^{-1}$ and $\mo{W} = \K\Y$, or equivalently $\set{Y} = \set{P}^{-1}$ and $\set{W} = \set{K}\set{Y}$, and pre-post multiplying by $\set{Y}$ we get
\begin{align*}
& \set{Y}\Abar^T + \Abar\set{Y} + \set{W}^T\Bbar^T+ \Bbar\set{W} + \set{Y}\Qbar\set{Y} + \set{W}^T\Rbar\set{W} \leq 0.
\end{align*}
By redefining the cost function in terms of $\Y$ and making use of the Schur complement, we arrive at the result.
\end{proof}

In our previous work \cite{fisher2009linear}, four controller synthesis algorithms were presented, among which was the formulation $\u=\K\x$, with constant deterministic gain $\K$. The synthesis condition for this case was presented in terms of a bilinear matrix inequality, which could not be convexified. Here, we present a convex formulation for the same synthesis problem. 

Next we present Monte-Carlo based algorithms for stability analysis and controller synthesis. These will serve as benchmarks for the polynomial chaos based algorithms. Let $\{\param_s\}_{s=1}^{S} \sim p(\param)$ be samples drawn from distribution $p(\param)$, and the sample trajectories be $\x(t,\param_s)$. Define $\X_{mc}:=[\x(t,\param_1)\; \cdots \; \x(t,\param_S)]$, and $\x_{mc}:=\vec {\X_{mc}}$. Recall that for a function $F(\param)$, the expected value $\Exp{F(\param)}$ can be approximated as
\begin{align} 
& \Exp{F(\param)}:= \int_{\domain{\param}} F(\param)p(\param)d\param  \nonumber \\
& \approx \frac{1}{S} \sum_{s=1}^{S} F(\param_s), \text{ with } \param_s \sim p(\param). \eqnlabel{mcApprox}
\end{align}
Therefore, EMS-stability condition with Lyapunov function $V(\x(t,\param)) := \x^T(t,\param)\P\x(t,\param)$, $\x_s:=\x(t,\param_s)$ and $\A_s := \A(\param_s)$, can be approximated as 
\begin{align*}
 & \frac{1}{S}\sum_{s=1}^S \dot{V}(\x_s) \leq -\alpha \frac{1}{S}\sum_{s=1}^S V(\x_s), \\
\implies & \frac{1}{S}\sum_{s=1}^S \x_s^T(\A_s^T \P + \P \A_s) \x_s \leq -\alpha \frac{1}{S}\sum_{s=1}^S \x_s^T \P \x_s,\\
\implies & \frac{1}{S}\sum_{s=1}^S \x_s^T(\A_s^T \P + \P \A_s + \alpha \P )  \x_s \leq 0, \\
\implies & \frac{1}{S} \x_{mc}^T \diag(\A_s^T \P + \P \A_s + \alpha \P)_{s=1}^{S} \x_{mc} \leq 0,\\
\end{align*}
or 
\begin{equation}
\A_s^T \P + \P \A_s + \alpha \P \leq 0, \text{ for } s=1,\cdots,S.
\eqnlabel{mcStability}
\end{equation}
Similarly, a feasible gain $\K$ can be obtained by satisfying 
\begin{align}
\Y \A_s^T + \A_s \Y + \W^T \B_s^T + \B_s \W  + \alpha \Y \leq 0,
\eqnlabel{mc:feasible}
\end{align}
for $s=1,\cdots,S$, with $\B_s:=\B(\param_s)$. Finally, the optimal gain for cost function given by \eqn{cost}, can be obtained by solving the following optimization problem
\[
\max \textbf{tr} \Y
\]
subject to 
\begin{equation}
\small
\begin{bmatrix}
\Y\A_s^T + \A_s\Y + \W^T \B_s^T + \B_s\W & \Y & \W^T \\ 
\Y & -\mo{Q}^{-1} & \mo{0} \\
\W &  \mo{0} & -\mo{R}^{-1} 
\end{bmatrix} \leq 0,
\eqnlabel{mc:optimal}
\end{equation}
for $s=1,\cdots,S$. Conditions in \eqn{mc:feasible} and \eqn{mc:optimal} are derived by approximating the expectation integral using \eqn{mcApprox} and following the argument presented in the derivation of \eqn{mcStability}.

\section{Example}
The plant considered here is an F-16 aircraft at high angle of attack \cite{fenwu:f16}, with states $x=[V\;\;\alpha\;\;q\;\;\theta\;\;T]^T$, where $V$ is the velocity, $\alpha$ the angle of attack, $q$ the pitch rate, $\theta$ is the pitch angle, and $T$ is the thrust.  The controls, $u=[\delta_e\;\delta_{th}]^T$, are the elevator deflection $\delta_e$, and the throttle $\delta_{th}$. This linear model is obtained about $\alpha=35^o$, where the moderately high angle of attack causes inaccurate modeling of aerodynamic coefficients and thus results in parametric uncertainty. The $A(\param)$ and $B$ matrices are given by

\begin{equation} \eqnlabel{f16}
A(\param)=\left[\begin{array}{ccccc}0.1658&-13.1013& -7.2748(1+0.4\Delta) &-32.1739&0.2780\\
0.0018&-0.1301&0.9276(1+0.4\Delta)&0&-0.0012\\
0&-0.6436&-0.4763&0&0\\
0&0&1&0&0\\ 0&0&0&0&-1
\end{array}\right], \;
B=\left[\begin{array}{cc}0&-0.0706\\0&-0.0004\\0&-0.0157\\0&0\\64.94&0
\end{array}\right].
\end{equation}

Similar to the model in \cite{fenwu:f16}, the terms in parenthesis in the $A(\param)$ matrix are assumed to be uncertain and are functions of a \textit{single} random variable $\param$, uniformly distributed over $[-1,1]$. This uncertainty is due to the uncertainty in the damping term $C_{xq}$, which is difficult to model in high angle of attack. The uncertainty is assumed to be distributed uniformly by $\pm 40\%$ about the nominal values $-7.2748$ and $0.9276$ respectively. \Fig{result} shows the variation of $\|\P^\ast\|$ obtained by solving the optimization problem outlined above for both Monte-Carlo and polynomial chaos based formulation. The objective function is defined by $\mo{Q}:=\diag([1E-1, 1E2, 1E-2, 1E0, 1E-3])$, and $\mo{R}:= \diag([1E2, 5E-3])$. In the Monte-Carlo approach, the problem is solved with sample sizes $5, 10, 50, 100, 200, 500, 1000, 1500,$ and $2000$ respectively. To capture the confidence in the solution for each sample size, the optimization problem is solved 100 times with new samples. \Fig{result} shows the mean $\|\P^\ast\|$ and its standard deviation. We can observe that the mean $\|\P^\ast\|$ converges and the confidence in the solution also improves as we increase the sample size. \Fig{result} also shows the convergence of $\|\P^\ast\|$ with increasing order of polynomial chaos approximation, solved with approximation order $1,\cdots,20$. Since the polynomial chaos approach is a deterministic framework, there is no variability in the answer for a given approximation order. The $x$-axis in \fig{result} is the actual number variables in the optimization problem that is solved and is obtained from \texttt{CVX}\cite{grant2008cvx}/\texttt{SDPT3}\cite{toh1999sdpt3}. This is proportional to the number of samples in Monte-Carlo approach and the order of approximation in polynomial chaos approach. Comparing the two plots we see that the trend with polynomial chaos is consistent with mean trend from Monte-Carlo. However, to get a high confidence solution, the Monte-Carlo approach requires to solve a problem with an order of magnitude higher in size. With polynomial chaos approach we can get close to the converged solution with fairly low order approximation, and thus offers a computational advantage for the class of problems considered here. Although, \fig{result} shows this trend for the specific system, we can expect to see this advantage for general systems as well. This is because, in general, polynomial chaos is computationally more efficient than Monte-Carlo for characterizing uncertainty in dynamical systems.

\begin{figure}[h!]
\includegraphics[width=\textwidth]{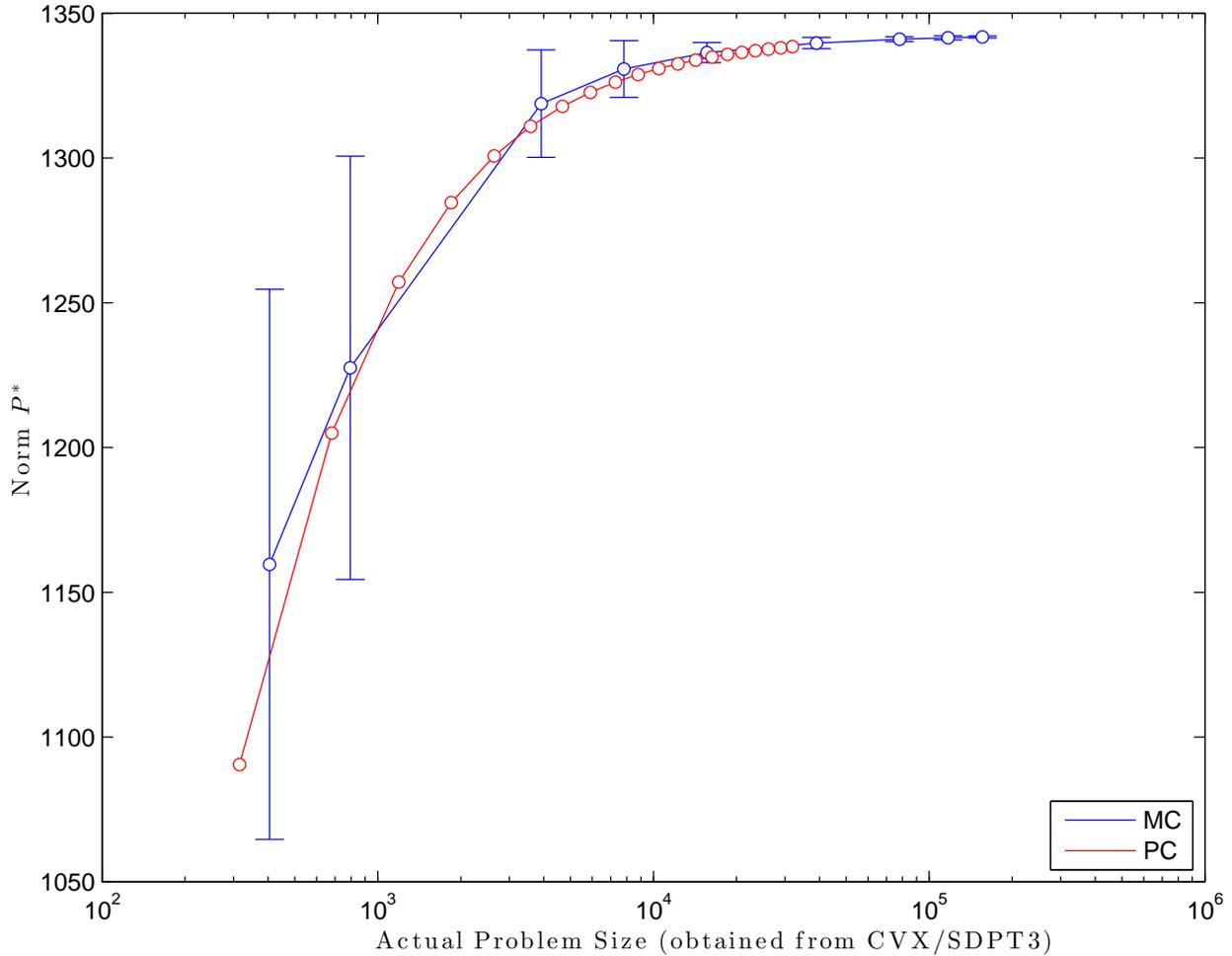}
\caption{Variation of $\|\P^\ast\|$ with optimization problem size, for Monte-Carlo (mean and standard deviation) and polynomial chaos based controller synthesis algorithms. Problem size is proportional to the number of samples in Monte-Carlo approach and the order of approximation in polynomial chaos approach. }
\figlabel{result}
\end{figure}

\section{Summary}
In this paper we presented new convex conditions for EMS-stability of linear dynamical systems with probabilistic uncertainty in system parameters. These results were obtained using the polynomial chaos framework and are similar to the well known results for deterministic systems. We applied this framework to design EMS-stabilizing controllers for an F-16 aircraft model and demonstrated the computational advantage of polynomial chaos based approach over Monte-Carlo based approach for analyzing dynamical systems with random parameters.
\bibliographystyle{IEEEtran}
\bibliography{/Users/raktim/Library/texmf/tex/latex/raktim}

\end{document}